\DeclareMathOperator*{\argmin}{argmin}
\newtheorem{prop}{Proposition}
\newtheorem{definition}{Definition}
\theoremstyle{definition}
\newtheorem{assumption}{Assumption}
\newcommand{\rev}[1]{\textcolor{black}{#1}}
\begin{document}

\title{Mode Selection in Cognitive Radar Networks}
\author{William W. Howard, Samuel R. Shebert, Anthony F. Martone and R. Michael Buehrer
\thanks{W.W. Howard, S.R. Shebert and R.M. Buehrer are with Wireless@VT, Bradley Department of ECE, Virginia Tech, Blacksburg, VA, 24061. \\ 
A.F. Martone is with the U.S. Army Research Laboratory, Adelphi, MD 20783. (e-mail:{anthony.f.martone.civ}@army.mil).\\
Distribution Statement A: Approved for public release. Distribution is unlimited. \\
Portions of this work were presented previously \cite{howard2023_modecontrolconf}. \\
Correspondence:$\{${wwhoward}$\}$@vt.edu  
}
}

\maketitle
\pagenumbering{roman}

\begin{abstract}

Cognitive Radar Networks, which were popularized by Simon Haykin in 2006, have been proposed to address limitations with legacy radar installations. 
These limitations include large physical size, power consumption, fixed operating parameters, and single point vulnerabilities. 
Cognitive radar solves part of this problem through adaptability, using biologically inspired techniques to observe the environment and adjust operation accordingly. 
Cognitive radar \emph{networks} \rev{(\textbf{CRNs})} extend the capabilities of cognitive radar spatially, providing the opportunity to observe targets from multiple angles to mitigate stealth effects; distribute resources over space and in time; obtain better tracking performance; and gain more information from a scene.
Often, problems of cognition in CRNs are viewed through the lens of iterative learning problems - one or multiple cognitive processes are implemented in the network, where each process first observes the environment, then selects operating parameters (from discrete or continuous options) using the history of observations and previous rewards, then repeats the cycle. 
Further, cognitive radar networks often are \rev{modeled} with a flexible architecture and wide-bandwidth front-ends, enabling the addition of electronic support measures such as passive signal estimation. 
In this work we consider questions of the form ``How should a cognitive radar network choose when to observe targets?'' and ``How can a cognitive radar network reduce the amount of energy it uses?''. 
We implement tools from the multi-armed bandit and age of information literature to select \emph{modes} for the network, choosing either an active radar mode or a passive signal estimation mode. 
We show that through the use of target classes, the network can determine how often each target should be observed to optimize tracking performance. 

\end{abstract}

\begin{IEEEkeywords}
radar networks, age of information, cognitive radar, online learning
\end{IEEEkeywords}

\section{Introduction}
\label{sec:intro}

There is a desire in the literature and within DoD for devices with low size weight and power (\textbf{SWaP}). 
One current method by which this is achieved is through the use of flexible high-speed computing platforms such as FPGAs. 
Modern devices are capable of very fast digital signal processing with very high bandwidths. 
We examine an application of flexible computing in networks of low-power, multi-mode cognitive radar devices (\textbf{CRNs}) \cite{howard2022_MMABjournal} \cite{howard2023_hybridjournal} \cite{haykin2005}. 
These radar devices (\textbf{nodes}) \cite{thornton2022_universaljournal} have the ability to observe targets via active radar observation or passive signal detection and classification (Electronic Support Measures, \textbf{ESM}) and quickly alternating between the two modes, selecting one mode in each of many time steps. 
In addition to the radar nodes, a fusion center (\textbf{FC}) is present to combine target information as well as provide feedback to the radar nodes. 
Over the course of many target tracks, the CRN groups targets by behavior similarity (\textbf{classes}), assuming that the observed targets are generated by a finite number of target distributions. 
We show that if targets of the same class are identifiable by their signal or physical behavior, then more accurate target class estimation will result in 1) a reduction of effective radiated power from each node and 2) an improvement in target tracking performance. 

This ``mode selection'' is motivated by capabilities a CRN may already possess. 
Prior work \cite{9178313} has considered the use of spectrum sensing to determine the utility of a specific radar action in the presence of noise and interference. 
The presence of other devices in a given channel can indicate the quality of radar measurement using that channel. 
Other work \cite{howard2021_multiplayerconf} has considered the use of spectrum sensing to detect and avoid mutual interference, instances of nodes in a CRN interfering with each other.

The targets are not only grouped by motion model similarity, but by signal emission similarity as well. 
We leverage the fact that most modern targets, civilian or military, tend to have characteristic radio emissions (e.g., FM voice communication and Automatic Dependent Surveillance Broadcast (\textbf{ADS-B}) in general aviation; control, telemetry, and data downlink in consumer unmanned aerial vehicles (\textbf{UAVs}); two-way voice communication for hot-air balloons). 
By using passive ESM techniques rather than active radar, the CRN nodes are able to take advantage of additional target information while reducing their power usage and more importantly their radiated power. 
This allows the CRN to perform additional target modeling, associating unique signal characteristics with unique physical characteristics. 
To continue with the previous examples, general aviation aircraft tend to exhibit different kinematics than consumer UAVs, which are in turn physically different from lighter-than-air balloons. 
We show that leveraging these associations can result in the same or better tracking error while requiring less power consumption at the CRN nodes.

The different types of targets are formed into ``classes'' over many independent tracks. 
Then, the CRN nodes can choose a mode of operation that depends on the targets it is currently tracking - choosing radar when it is necessary, and passive ESM when it provides additional information. 
We propose two main categories of decision process: a centralized approach, which considers the entire network and all current tracks for decision-making at the cost of higher communication overhead; and a distributed approach, which allows CRN nodes to select a mode based on the targets they are currently tracking. 

\subsection{Organization}
In Section \ref{sec:background}, we review recent work in the related fields of multi-target tracking, cognitive radar, and radar networks. 
Section \ref{sec:modeling} discusses the model for the CRN considered in this work, as well as specifics on target modeling and tracking. 
Section \ref{sec:methods} develops our centralized and distributed mode selection algorithms. 
In Section \ref{sec:simulations} we present analysis and numerical simulations of our proposed techniques, comparing them against alternatives. 
In Section \ref{sec:conclusions} we discuss our results, draw conclusions, and recommend future work in distributed and centralized control of cognitive radar networks.

\subsection{Contributions} 
We build on previous contributions in the areas of radar network control, cognitive radar, and multi-target tracking. Portions of this work were presented previously \cite{howard2023_modecontrolconf}. We contribute the following to the state of the art: 
\begin{itemize}
    \item A model for mode selection in multi-function cognitive radar networks. 
    \item An analysis of multiple target class formation based on characteristic motion and signal emission models. 
    \item Mathematical analysis of a clustering-based class formation technique. 
    \item A distributed approach which mitigates the effects of network latency. 
    \item A centralized technique which incorporates information from the entire network for decision-making. 
    \item Numerical simulations to support our conclusions. 
    \item We show that our proposed techniques outperform radar-only observation as well as outperforming a random selection algorithm which achieves the same radar observation rate, but does not consider target class formation. 
\end{itemize}

\subsection{Notation}
We use the following notation. 
Matrices and vectors are denoted as bold upper $\mathbf{X}$ or lower $\mathbf{x}$ case letters respectively.
Element-wise multiplication of two matrices or vectors is shown as $\mathbf{X}\odot \mathbf{Y}$. 
Functions are shown as plain letters $F$ or $f$. 
Sets $\mathcal{A}$ are shown as script letters. 
Denote the Lebesgue measure of a set $\mathcal{A}$ as $|\mathcal{A}|$. 
When we wish to show the number of elements in a (finite) set $\mathcal{A}$ rather than its measure, we use the cardinality $\#(\mathcal{A}$). 
The transpose operation is $\mathbf{X}^T$. 
The backslash $\mathcal{A}\backslash \mathcal{B}$ represents the set difference. 
Boxes (intervals) in $\mathbb{R}^d$ are written as $[a,b]^d$ and when the elements of a set are denoted, they are given as $\mathcal{A} = \{a, b, c, \dots \}$. 
Random variables are written as upper-case letters $X$, and their distributions will be specified. 
The set of all real numbers is $\mathbb{R}$ and the set of integers is $\mathbb{Z}$. 
The speed of electromagnetic radiation in a vacuum is given as $c$. 
The Euclidean norm of a vector $\mathbf{x}$ is written as $||\mathbf{x}||$. 
Estimates of a true parameter $p$ are given as $\hat{p}$. 
The operator $x \overset{{\scriptscriptstyle \operatorname{R}}}{\leftarrow} \mathcal{A}$ denotes assigning $x$ as a random sample of the set $\mathcal{A}$.


\section{Background}
\label{sec:background}
CRNs and general cognitive radar are two areas of much recent study. 
Recent contributions in the area of cognitive radar networks include \cite{howard2023_timelyjournal} which investigates Age of Information (\textbf{AoI}) approaches to the problem of timely updating in CRNs, \cite{howard2022_MMABjournal} which investigates cooperative spectrum allocation in CRNs, and \cite{howard2023_hybridjournal} which investigates the relationship and trade-offs between centralized and distributed spectrum allocation techniques. 
Recent contributions in the area of cognitive radar include \cite{thornton2020efficient} which investigates waveform selection, \cite{thornton2022_universaljournal} which discusses a universal learning technique for multi-track optimization, and \cite{Martone_CRN_loop} which discussed ``meta-cognition'', the process of choosing between different cognitive strategies.

\rev{Of particular importance to the current work is \cite{howard2023_timelyjournal}, where the trade-off between centralized cognition and distributed cognition is examined. 
It is shown that centralized systems have some inherent problems, such as the need to move information through the network to a central location and the delays that this movement can cause. 
In many decision-making problems, timeliness is very important and this network latency can impact the performance of centralized techniques. }
There has also been recent work in the area of network-based passive target localization. 
The authors of \cite{6807568} present a centralized passive estimation network, where the nodes act as amplify-and-forward units, and the total amplification of the network is power-limited. 
All of the decision-making in the network is located in the fusion center, and is confined to allocating the limited amplification power to the nodes. 
The targets are modeled as transmitting one of several complex-valued signals. 
The FC fuses the node measurements with a goal of estimating the true target signal. 
This work is useful because it provides a framework for signal estimation, which could support classification. 
The goal of \cite{6807568} is simply to reconstruct (with high accuracy) the signals emitted by the target. 
Our work focuses (among other goals) on a slightly different aspect of signal analysis - we're concerned with identifying the \emph{type} of signal emitted by a target, rather than reconstructing it. 

\rev{The problem of multi-sensor target identification has been previously studied in \cite{Bogler1987, Hong1993} and more recently in \cite{Lei2020, Li2022}. 
These works use Dempster-Shafer evidence theory to fuse measurements from multiple sensors to improve target identification rates. 
In \cite{Challa2001} and \cite{Cao2018}, radar and passive signal classification are combined to improve tracking performance by classifying targets. 
However, it is assumed that the pairing between radar targets and emitted signals are known \textit{a priori}, which may not be known in practice.
Another use of multi-function radar networks is in dual radar-communication systems \cite{8386661}, \cite{8114253} where communications information is embedded into transmitted radar waveforms. 
The particular problem of radar mode control (i.e., selecting one of several operational modes in each time step for a radar system) has been investigated before \cite{10038921}
In addition, the problem of fusing information from discrete radar and ESM sensors has been investigated \cite{rs15163977}
However, we believe that the particular problem of cognitive mode selection in radar networks has not yet been investigated. 
}

\rev{
One of the main problems with passive and active sensing in such radar systems is \emph{sensor fusion} \cite{8943388}, \cite{7979175}. 
Sensor fusion is a rich field of literature containing several seminal contributions, such as the development of Kalman tracking filters \cite{1271397}
Tracking filters use estimates of measurement noise and process noise combined with actual sensor measurements to provide an estimate of the target location. }

Multi-target tracking is also a rich field, containing several important contributions. 
In particular, the study of \emph{probability hypothesis density} (\textbf{PHD}) filters \cite{1710358} \cite{5259179} aims to solve the problem of \emph{target association}. 
\rev{PHD filters are very useful in scenarios with multiple targets and multiple sensors. }
The PHD is the first moment of the target state space, and is akin to the expected value of a random variable. 
It allows for the estimation of the number of targets in a scene as well as the propagation of the state of those targets through time. 
As target detections are received (from point or extended object models, among others), the PHD filter is updated. 
In \cite{4516991} a multiple model PHD filter is developed, where several PHD filters are used in parallel to estimate the motion model of targets.


\section{Target Modeling}
\label{sec:modeling}
\subsection{Discrete Time Markov Chains}
\label{ss:dtmc}
We consider targets in a region $B$ which are spatially distributed according to a stochastic Poisson point process (\textbf{PPP})
A Poisson number of targets with mean $\lambda_M$ is drawn, and each target is assigned an initial state uniformly at random in the region $B$. 
Generally, each target in the observable region $B$ can be partially or wholly described by several parameters. 
Let the parameters describing a target $m$ be collected into a set $\mathcal{X}_m$. 
In other words, a given element $E_m\in\mathcal{X}_m$ describes some quality of the target $m$. 
We limit our consideration of target $m$ to those parameters which can be described as time-homogeneous ergodic Markov chains in a finite state space. 
So, a parameter $E_m$ is a time-varying quantity that can take on finitely many values. 
Since $E_m$ is a discrete-time Markov chain (\textbf{DTMC}), it has the Markov property given in Eq. (\ref{eq:markov}). 
This means that the probability of transitioning to any given state is only dependent on the current state and not the history; i.e. there is no memory length. 
\begin{align}
\label{eq:markov}
    \begin{split}
        \Pr[E_m^{(t+1)}] = \Pr[e_m &\;|\; E_m^{(1)} = e_m^{(1)}, E_m^{(2)} = e_m^{(2)}, \dots, \\
        E_m^{(t)} = e_m^{(t)}]\\
        &= \Pr[E_m^{(t+1)} = e_m | E_m^{(t)} = e_m^{(t)}]
    \end{split}
\end{align}

We say that there are $N_E$ states which form a finite set $\mathcal{E}$ called the state space of the DTMC. 
The transition matrix $P_m^E$ consists of entries 
\begin{equation}
    \label{eq:markov_transition}
    p_{ij} = \Pr[E_m^{(t+1)} = j | E_m^{(t)} = i]
\end{equation}
which describe the probability of transitioning states at any time step $t$. 
Note that all rows sum to $1$ and all elements are non-negative. 
Since the process is time-homogeneous, the transition matrix is not time dependent. 
Lastly, the stationary distribution of parameter $E_m$ can be defined as Eq. (\ref{eq:stationary}). 
The stationary distribution $\pi_m^E$\footnote{In the context of DTMCs, we slightly abuse notation to allow $\pi_m^X$ to denote the parameter described by $\pi$, rather than exponentiation. } can be seen to be a normalized left eigenvector of the transition matrix \cite{markov_model_book}. 
\begin{equation}
    \label{eq:stationary}
    \pi_m^E = \pi_m^E P_m^E
\end{equation}
Since the process is ergodic, the stationary distribution is also the limiting distribution of any starting distribution \cite{norris1998markov}. 
In other words, there is only one eigenvector of $P_m^E$ with an eigenvalue of 1. 
This means that as the number of samples $n$ trends towards infinity, the empirical stationary distribution is not dependent on the initial state distribution. 

\subsection{Class Definitions}
\label{ss:definitions}
\begin{definition}[Equal in State Distribution]
    \label{def:eq_in_dist}
    Two random variables $X$ and $Y$ are said to be \emph{equal in state distribution} if the following properties hold: 
    \begin{enumerate}
        \item Both $X$ and $Y$ can be described as DTMCs. 
        \item If $N_X$ is the size of the state space for $X$ and $N_Y$ is similarly defined for $Y$, then $N_X = N_Y$. 
        \item If $\pi^X$ is the stationary distribution for $X$ and similarly for $Y$, then $\pi^X = \pi^Y$. 
    \end{enumerate}    
\end{definition}
%
%
\begin{definition}[Target Class]
    Let the parameters describing target $m_0$ be collected into $\mathcal{X}_{m_0}$. 
    If there exists a target $m_1$ with the following property, then it is said to be of the same \emph{class} as $m_0$. 
    Denote the class as $C$. 
    \begin{itemize}
        \item Each element of $\mathcal{X}_{m_0}$ maps to a unique element of $\mathcal{X}_{m_1}$. Corresponding elements are defined over the same state space and are equal in state distribution. 
    \end{itemize}
\end{definition}
For a given target class $C$ with a given parameter $E$, we say that the set $\mathcal{C}$ contains the state space $(\mathcal{E})$ of $E$  as well as the stationary distribution $\pi^E$ over $\mathcal{E}$ which defines the class. 
\begin{definition}[Target Family]
    \label{def:family}
    A \emph{target family} is a group of target classes $\{C_1, C_2, \dots\}$ with the following properties. Call the family $F$. 
    \begin{enumerate}
        \item If state space $\mathcal{E}$ is contained in $\mathcal{C}_1$, then it is also contained in $\mathcal{C}_i$ for all $\mathcal{C}_i \in \{C_1, C_2, \dots\}$. 
        \item If a stationary distribution $\pi^E$ is contained in $\mathcal{C}_i$, then it is \emph{not} in $\mathcal{C}_j$ for any $\mathcal{C}_j \in \{C_1, C_2, \dots\}\backslash C_i$. 
    \end{enumerate}
\end{definition}
In other words, all classes within a family are defined by the same set of parameters, although they do not have the same parameter values. 
We discuss in Section \ref{sec:estimation} how a CRN node forms an estimate of target parameters.

\begin{prop}[Unique Class]
    \label{prop:unique}
    Let node $n$ draw an estimate $\hat{\pi_m^E}$ of the stationary distribution of parameter $E_m$ for target $m$. 
    The node estimates the corresponding class, $\hat{C}$. 
    Further, let target $m$ be drawn from class $C$ in a family $F$. \\
    If $\hat{\pi_m^E} = \pi_m^E$, then $\hat{C} = C$. 
\end{prop}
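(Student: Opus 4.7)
The plan is to exploit the injectivity guaranteed by property 2 of Definition \ref{def:family}: within a target family, no two distinct classes share the same stationary distribution for a given parameter. This is essentially what makes the stationary distribution a usable class label, and the proposition is the formal statement that a correct estimate of the stationary distribution yields a correct class assignment.

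First, I would clarify the implicit class estimation rule. Given a family $F = \{C_1, C_2, \dots\}$ that all share the same state space $\mathcal{E}$ for parameter $E$, the natural (and essentially only sensible) estimator of the class from a stationary distribution estimate $\hat{\pi_m^E}$ is to return the unique $C_i \in F$ whose defining stationary distribution equals $\hat{\pi_m^E}$. Existence of such a $C_i$ in the relevant case follows from the assumption that target $m$ was drawn from some class $C \in F$; uniqueness is exactly property 2 of Definition \ref{def:family}.

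Next, I would carry out the direct implication. Since target $m$ is drawn from class $C$, the Target Class definition gives $\pi_m^E = \pi^E_C$, where $\pi^E_C$ denotes the stationary distribution contained in $\mathcal{C}$. Combining with the hypothesis $\hat{\pi_m^E} = \pi_m^E$, we obtain $\hat{\pi_m^E} = \pi^E_C$. Suppose for contradiction that the estimator returns $\hat{C} = C' \neq C$ for some $C' \in F$. Then the defining stationary distribution $\pi^E_{C'}$ of $C'$ equals $\hat{\pi_m^E} = \pi^E_C$, which places the same stationary distribution in both $\mathcal{C}$ and $\mathcal{C}'$, contradicting property 2 of Definition \ref{def:family}. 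Hence $\hat{C} = C$.

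The proposition is largely a bookkeeping exercise once the definitions are in hand; the only modest obstacle is writing down what ``the node estimates the corresponding class'' means with enough precision that uniqueness applies. In particular, one should be careful that the statement concerns only a single parameter $E$, whereas a class is in general specified by all of $\mathcal{X}_m$: property 2 of the family definition still makes a single parameter's stationary distribution sufficient for disambiguation, so no additional assumption on $\mathcal{X}_m$ is needed. The converse (estimation error implies misclassification) is \emph{not} being asserted, which is consistent with the fact that distinct DTMCs with the same stationary distribution can exist outside a fixed family.
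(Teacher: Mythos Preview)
Your argument is correct and follows the same route as the paper, which simply notes that the result is trivial by Definition~\ref{def:family}. You have essentially unpacked that one-line proof: property~2 of the family definition guarantees injectivity of the map from stationary distribution to class, so an exact estimate forces $\hat{C}=C$.
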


\begin{proof}[Proof of Prop. \ref{prop:unique}]
    Trivial by Def. \ref{def:family}
\end{proof}

So, targets $m$ within a family $F$ have the useful property that if one parameter's stationary distribution $\pi_m^E$ can be estimated, then the class can also be estimated. 

\subsection{Motion Modeling}
As target $m$ moves through space, its state at a time $t$ can be described by Eq. (\ref{eq:state}) which is comprised of position and velocity. 
We summarize the higher-order derivatives by defining \emph{motion models} \cite{ristic2003beyond}, which describe how the target's state propagates in time. 
Let $V_m(t)$ be the motion model for target $m$, and say that it takes on one of $N_V$ states for each time step. 
In other words, $V_m(t)$ is a time-homogeneous ergodic discrete-time Markov chain in a finite state space. 
Let $\pi_m^V$ be the stationary distribution of the states in $\mathcal{V}$, the space of possible motion models. 
Lastly let $P_m^V$ be the motion model state transition probability matrix for target $m$. 
\begin{equation}
    \label{eq:state}
    \mathbf{X}_m(t) = [x_m(t), \dot{x}_m(t), y_m(t), \dot{y}_m(t), z_m(t), \dot{z}_m(t)]
\end{equation}
\rev{Note that $V_m(t)$ describes the \emph{motion model state} while $\mathbf{X}_m(t)$ describes only the current position and velocity of the target. 
In order to describe the higher-order derivatives of position we use $V_m(t)$. }
In Section \ref{ss:radar}, we discuss how the motion model for target $m$ is estimated. 

\subsection{Signal Modeling}
\label{ss:sig_model}
Each target $m$ also operates radio equipment (i.e., radar, communications, telemetry, etc. ) resulting in electromagnetic emissions that can be detected via passive ESM. 
We say that $S_m(t)$ is the state of the emissions from target $m$ at time $t$, and that $S_m(t)$ takes on one of the $N_S$ values in the space of possible emissions $\mathcal{S}$. 
So, $S_m(t)$ is a time-homogeneous ergodic discrete-time Markov chain in a finite state space. 
Let $\pi_m^S$ be the stationary distribution for target $m$, and say $P_m^S$ is the emission state transition probability matrix. 
We consider the absence of a signal to be a signal state. 
So, we say that one of the states in the space consists of no signals. 
In Section \ref{ss:esm} we define the \emph{probability of interception} of target signals. 

\subsection{Summary}
We model a target $m$ using two Markov processes. 
The set $\mathcal{X}_m$ contains the motion model $V_m(t)$ which takes on one of $N_V$ states with transition probabilities $P_m^V$ and stationary distribution $\pi_m^V$. 
Further, it contains $S_m(t)$, the emission state which takes on one of $N_S$ states with transition probabilities $P_m^S$ and stationary distribution $\pi_m^S$. 
Both processes are ergodic time-homogeneous discrete-time Markov chains in finite state spaces, which means that the transition probabilities are not time dependent and there is a nonzero probability of transitioning from any state $i$ to any state $j$. 

There exist \emph{classes} of targets which are composed of targets whose motion and emission states are equally distributed\footnote{Note that this does \emph{not} imply that the motion or emission state of two targets in the same class are identical, just that they are identically distributed over the respective state spaces. }. 
Finally, the target classes form a \emph{family} which provides uniqueness to each class.

\section{Target Estimation}
\label{sec:estimation}
We consider a network which is capable of performing either active radar observation or passive signal parameter estimation (ESM). 
\rev{Examples of such a network might include networked primary air surveillance radar or vehicular radar in connected vehicles. 
Connected vehicle radar typically operates in the Industrial, Scientific, and Medical (\textbf{ISM}) band at $\qty{24}{\giga\hertz}$. 
An ESM receiver could operate in this same band and detect radar pulses from similar devices in other networks. }

Due to target class estimation, ESM estimation provides the ability to classify a target and reduce subsequent tracking error. 
In addition, selecting the passive ESM action lowers the observabilty of the radar network to adversaries. 
This is quantified by the \emph{maximum intercept range} $R_{I, \text{max}}$, Eq. (\ref{eq:radar_max_intercept}), where $P_{\text{radar}}$ is the transmit power, $G_t$ is the transmit gain in the direction of the intercept receiver, $G_I$ is the intercept receiver gain, $L_1$ is one-way atmospheric loss, $\lambda$ is the wavelength of the center frequency, $L$ is system loss, $\delta_I$ is the intercept receiver sensitivity, $F_1$ is the intercept receiver noise figure, $B_1$ is the intercept receiver bandwidth, SNR$_{Ii}$ is the intercept receiver SNR, $k$ is Boltzmann's constant, and $T_0$ is the intercept receiver noise temperature. 
\begin{align}
\label{eq:radar_max_intercept}
    R_{I,\text{max}} &= \sqrt{\frac{P_{\text{radar}}G_tG_IL_1\lambda^2}{(4\pi)^2L\delta_I}}\\
    \delta_I &= kT_0F_1B_1(SNR_{Ii})
\end{align}
This is the maximum range at which an interceptor can detect the radar. 
When the number of radar transmissions is reduced, this reduces the average distance at which a transmission may be intercepted. 
The max intercept range for a single radar node with a variable transmit probability (i.e., portion of time spent selecting active radar) is shown in Fig. \ref{fig:radar_max_intercept}. 

\begin{figure}
    \centering
    \includegraphics[scale=0.65]{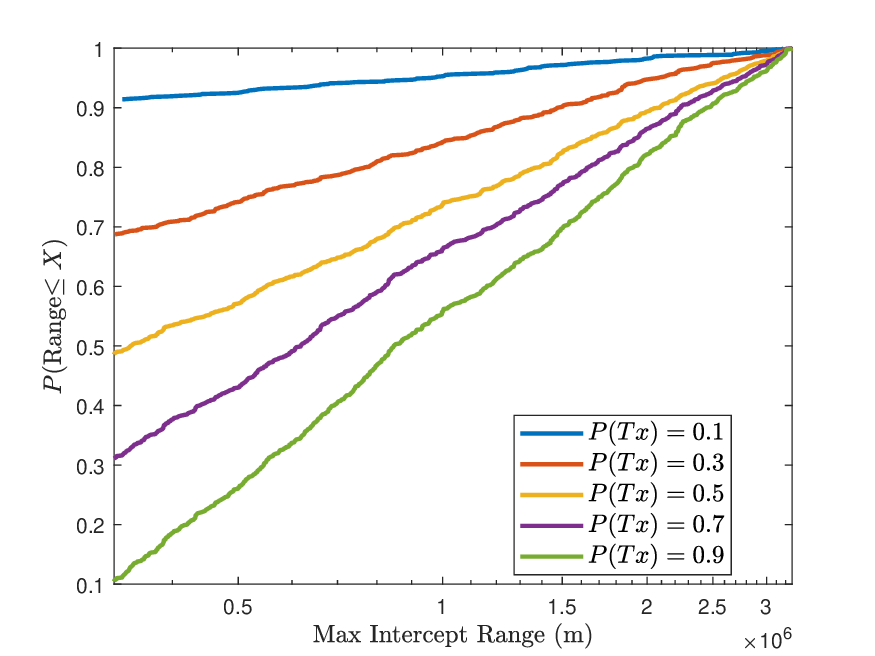}
    \caption{Maximum intercept range for a single radar node with a variable transmit probability. }
    \label{fig:radar_max_intercept}
\end{figure}

\subsection{Network Structure}
Following a similar structure to those presented in \cite{howard2023_timelyjournal} \cite{howard2023_timelyconf}, we use tools from stochastic geometry \cite{haenggi} and consider a compact region of $\mathbb{R}^3$, assuming that the network consists of a random set $\mathcal{N}$ of multi-function radar/ESM nodes generated by a Poisson Point Process \cite{8464057} with density $\lambda_N$ and positions $\textbf{X}_n$. 
Each node $n$ ``covers'' a region $C_n$ with measure $|C_n|$, which accounts for practical range limit for radar devices. 
Let $N$ with mean $\overline{N}$ be the random variable describing the number of such nodes. 
The distribution of $N$ is given as Eq. (\ref{eq:poisson_dist}), with the mean shown in  Eq. (\ref{eq:poisson_mean}). 
\begin{equation}
    \label{eq:poisson_dist}
    \Pr[N=n] = \frac{\lambda_N |B|^n e^{-\lambda_N |B|}}{n!}
\end{equation}
\begin{equation}
    \label{eq:poisson_mean}
    \overline{N} = \lambda_N|B|
\end{equation}
Similarly, there is a random set $\mathcal{M}$ containing $M$ targets at positions $\textbf{X}_m$, generated by a PPP with density $\lambda_M$ and mean $\overline{M} = \lambda_m|B|$. 
When we consider numerical simulations, a single simulation consists of a single realization of this model. 

Targets continue to exist in the next time step according to a fixed probability, and a Poisson number of new targets are born in each time step \cite{6507656}. 
These two events cancel out such that the target density is maintained. 

The coverage of the network is the union of the covered region for each node, Eq. (\ref{eq:coverage}). 
\begin{equation}
    \label{eq:coverage}
    \mathbf{C} = \bigcup_{n\in\mathcal{N}}C_n
\end{equation}
As a result, the probability that a target is ``covered'' is given by Eq. (\ref{eq:target_covered}). 
This is also the probability that any random point in $B$ is covered. 
\begin{equation}
    \label{eq:target_covered}
    \Pr[\mathbf{X}_m\in\mathbf{C}] = 1 - e^{-\lambda_N\mathbb{E}[|C_n|]}
\end{equation}
An example of this network geometry is shown in Fig. \ref{fig:stick_system}.

\begin{figure}
    \centering
    \includegraphics[scale=0.65]{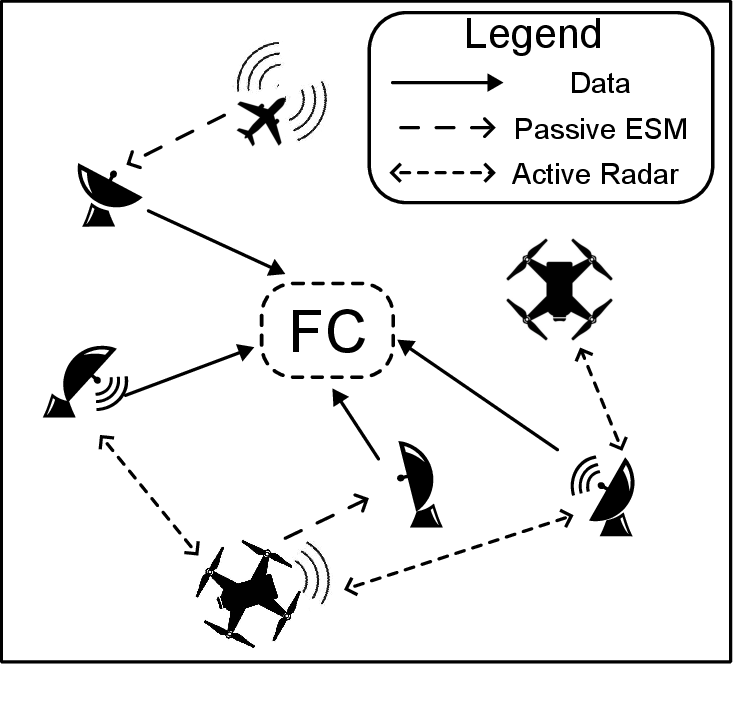}
    \caption{A model of the type of network we consider, where each node can choose between active and passive observation of several types of targets. The nodes send observations to the fusion center. }
    \label{fig:stick_system}
\end{figure}

\subsection{Active Radar}
\label{ss:radar}
Radar estimation allows for remote target position and velocity observation. 
We consider a monostatic radar configuration, where each node is capable of transmitting a train of coherent pulses, receiving scatter from clutter and target responses, then coherently combining the pulses and performing range-Doppler estimation. 
Constant false alarm rate (\textbf{CFAR}) detection is performed, by which the radar node estimates the presence or absence of targets in the range-Doppler map. 
We assume that when the radar mode is used, the radar uses a multi-function detection and tracking mode. 
In other words, it scans for new targets and attempts to obtain further measurements of previous tracks. 

    

\subsubsection*{Measurement}
\label{sss:radar_measurement}
Following the extended object model \cite{granstrom2017extended}, we assume that each target visible to the radar may occupy several resolution cells and therefore generate multiple detections per time step. 
Let node $n$ cover $\mathcal{M}_n^{(t)}$ targets at time $t$. 
Since the targets can move around, spawn, or retire \cite{6178085}, this quantity is time-varying. 
Note that 
\begin{equation}
    \label{eq:expected_target_coverage}
    \mathbb{E}[\#\mathcal{M}_n^{(t)}] = \lambda_M|S_n|
\end{equation}
is the expected number of targets covered by any given node. 
So, each target $m$ in the observable region for node $n$ generates a number of detections $\rev{\#(Z_{mn})}$: 
\begin{equation}
    \mathbf{Z}_{mn} = \left\{ \mathbf{z}^{(j)}\right\}_{j=1}^{\rev{\#(Z_{mn})}}
\end{equation}
Each target detection is missed with probability $\rev{1-}P_D$. 
\rev{As discussed in Sec. \ref{ss:tracking}, the sets represented by $\mathbf{Z}$ contain the radar observations. }
Since more than one target may be in the region covered by node $n$, the total number of target detections generated at time $t$ by node $n$ is given as Eq. (\ref{eq:radar_detections}). 
\begin{equation}
    \label{eq:radar_detections}
    \mathbf{Z}_n^t = \bigcup_{m\in\mathcal{M}_n^{(t)}}\mathbf{Z}_{mn}
\end{equation}
Then, a PHD filter \cite{1710358} \cite{4516991} is used to associate \cite{5259179} these target detections with previous tracks and generate new tracks if necessary. 
Since $\mathbf{Z}_n$ may contain false alarms generated at a rate of $\lambda_{FA}$, the PHD filter maintains a list of tentative and confirmed tracks. 
False alarms are uniformly distributed in the region. 

At a time step $t$, the output from the PHD filter is used as the current state of the estimated target track. 
Let $\hat{\mathcal{M}}_n^{(t)}$ be the set of active tracks at node $n$. 
Then, for a track $\hat{m}\in \hat{\mathcal{M}}_n^{(t)}$, $\mathbf{X}_{\hat{m}}$ represents the current state. 
Further tracking details are provided in Section \ref{ss:tracking}.

\subsection{Passive Electronic Support Measures}
\label{ss:esm}
Passive ESM estimation allows the node to make observations of targets without the need for high-power radio emissions. 
Unlike active radar sensing, ESM performance is dependent on the characteristics of the transmitting equipment. 
The primary values which impact the performance of the ESM receiver are the \emph{maximum detectable range} and the \emph{probability of intercept}. 

\subsubsection{Maximum Detectable Range}
\label{sss:max_range}
The maximum detectable range of targets\footnote{Note that this is the maximum range at which a node in the CRN can passively detect a target, which is different than Eq. (\ref{eq:radar_max_intercept}\rev{)} which is the maximum range at which an intercept receiver could detect a node in the CRN. } is dependent on the received SNR\footnote{Specifically, this is the instantaneous or non-integrated SNR. }, Eq. (\ref{eq:ESM_SNR}). 
\begin{equation}
    \label{eq:ESM_SNR}
    \text{SNR}^{ESM}_{mn} = \frac{P_t G_t G_r \lambda^2}{(4 \pi R)^2 P_n L}
\end{equation}
Note that the SNR from a target $m$ received at the $n^{th}$ node is dependent on several aspects of the target equipment: the transmit power of the target equipment $(P_t)$, the transmit antenna gain $(G_t)$, and the wavelength $(\lambda)$. 
In addition, there are a few other parameters which depend on the $n^{th}$ receiving ESM: the receive antenna gain $(G_r)$, the range to the target $(R_{mn})$, the receiver noise power $(P_n)$ and any other losses $(L)$. 
The receiver noise power is given as Eq. (\ref{eq:rx_noise}), where $T_0$ is $\qty{290}{\kelvin}$, $F$ is the receiver noise figure $(10\text{ dB})$, $B$ is the receiver bandwidth $(\qty{1}{\mega\hertz})$, and $k$ is Boltzmann's constant. 
\begin{equation}
    \label{eq:rx_noise}
    P_n = k T_0 F B
\end{equation}
Fig.  \ref{fig:ESM_SNR} \rev{compares the received instantaneous SNR for targets over range and transmit power.}
Of course, the SNR required for high probability detection with a low probability of false alarm will vary based on the detection algorithm, the signal capture duration, and the signal of interest. 
For cyclostationary detectors with a signal duration greater than a few milliseconds, it is possible to detect signals at a rate approaching $100\%$ at or below $\qty{0}{dB}$ SNR with a false alarm rate less than $1/100$ \cite{6914549} \cite{5757489}. 
Thus, we make Assumption \ref{ass:ESM_detection} below. 
\begin{figure}
    \centering
    \includegraphics[scale=.65]{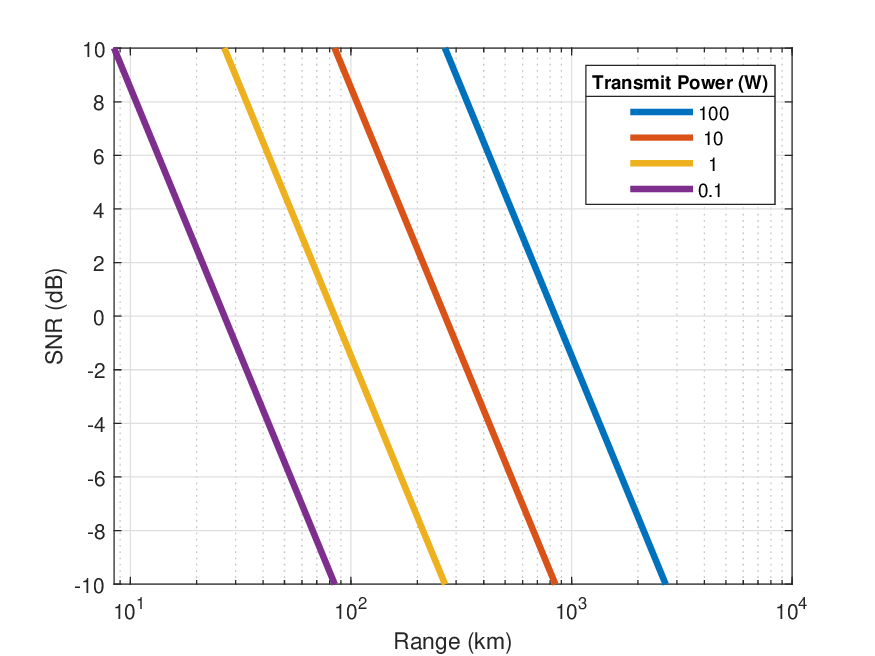}
    \caption{ESM receiver \rev{instantaneous} SNR for a center frequency of $\qty{1}{\giga\hertz}$, omnidirectional receive antennas (gain of \rev{$\qty{0}{dB}$}), and $\rev{\qty{3}{dB}}$ of losses. Given an SNR requirement of $\qty{0}{dB}$ for detection \rev{\cite{6914549, 5757489}}, targets can be detected at ranges of up to $\qty{100}{\kilo\meter}$, depending on the transmitter power. }
    \label{fig:ESM_SNR}
\end{figure}
\begin{definition}[Maximum Detectable ESM Range]
    The maximum detectable range $R_{mn}$ for electronic support measures between node $n$ and target $m$ is the maximum range for which Eq. (\ref{eq:max_range}) holds, where the $\text{SNR}$ is given by Eq. (\ref{eq:ESM_SNR}). 
    \begin{equation}
        \label{eq:max_range}
        \text{SNR}_{mn} \geq 0
    \end{equation}
    Denote $R_{mn} \text{ s.t. } \text{SNR}_{mn} \geq 0$ as $R_{mn}^{\text{ESM}}$
\end{definition}
Note that the \emph{probability of correct detection} $P_{d, mn}^{ESM}$ is the probability that a signal is correctly identified, and the \emph{probability of false alarm} $P_{fa, mn}^{ESM}$ is the probability that a detected signal is misidentified or mistakenly detected (i.e. not present). 
\begin{assumption}[In-Range Targets are Detectable]
    \label{ass:ESM_detection}
    Targets for which $R_{mn} < R_{mn}^{ESM}$ have $P_{d, mn}^{ESM}=1$ and $P_{fa, mn}^{ESM}$ = 0. 
\end{assumption}
While detectors with high $P_d$ and low $P_{fa}$ exist in the literature, the signal must still be present. 
We account for this by also defining the probability of intercept. 

\begin{figure}
    \centering
    \includegraphics[scale=.65]{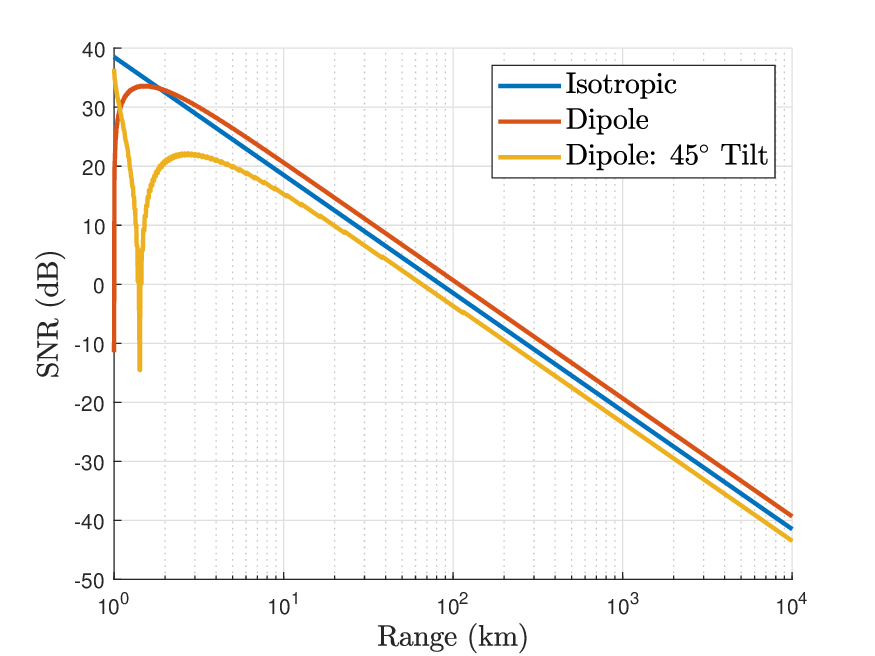}
    \caption{\rev{ESM receiver instantaneous SNR for multiple transmit antennas and orientations. The same system parameters as Figure \ref{fig:ESM_SNR}, with a 1 Watt transmit power.}}
    \label{fig:ESM_antenna}
\end{figure}

\subsubsection{Probability of Intercept}
\rev{The probability of intercept $P_{I,mn}^{ESM}$ is a term used to characterize the uncertainties in look angle, terrain, transmission rate, etc, which decrease the rate at which an in-range target is detected\footnote{\rev{Note that while $P_{d,mn}^{ESM}$ may be high, a low probability of intercept will still reduce the probability that a target signal can be identified. }} \cite[Ch. 4]{9101176}. In this work, the probability of intercept is modeled as:
\begin{equation}
    P_{I,mn}^{ESM} = \text{Pr}[I_{mn}^{S} \cap I_{mn}^{F} \cap I_{mn}^{T}]
\end{equation}
$I_{mn}^{S}$, $I_{mn}^{F}$, and $I_{mn}^{T}$ are the events where ESM node $n$ intercepts target $m$'s emission in space, frequency and time, respectively. The intersection of all events (in addition to the target being \textit{in-range}) is required for the emission to be identified. These events are assumed to be independent, such that:
\begin{equation}
    P_{I,mn}^{ESM} = P_{I,mn}^{S} P_{I,mn}^{F} P_{I,mn}^{T}
\end{equation}
where $P_{I,mn}^{S} = \text{Pr}[I_{mn}^{S}]$, and accordingly, $P_{I,mn}^{F}$ and $P_{I,mn}^{T}$ are the probabilities for frequency and time. These probabilities are the average rates that the ESM node's measurement intersects with the target's emission in the respective domains.}


\rev{For example, in the spatial domain, $P_{I,mn}^{S}$ is the average rate that the target's antenna illuminates the ESM receiver.
Figure \ref{fig:ESM_antenna} depicts this through the instantaneous SNR at an ESM receiver. In the case of the perfectly isotropic antenna, $P_{I,mn}^{S} = 1$, as the antenna radiates in all directions with equal gain. For practical antennas, such as a dipole, $P_{I,mn}^{S} < 1$. At particular antenna orientations and ranges, a null in the antenna pattern will be oriented towards the ESM receiver. This causes the SNR to drop below the detectable range and the ESM receiver to fail to intercept the target's emissions. These 'SNR outages' are shown in Figure \ref{fig:ESM_antenna} for the dipole antenna between 1 and 2 km ranges. Note that the SNR outage occurs at different ranges for the two dipole antenna orientations considered. This is because the geometry that causes the antenna null to point towards the ESM receiver jointly depends on orientation and range.}


\rev{Since the targets are maneuvering, the orientation of the targets antenna and range to the ESM receiver will change over time. Therefore, the SNR outages are treated as randomly occurring failed interceptions. Figure \ref{fig:ESM_antenna} considers dipole antennas, but this model applies to more directional antennas as well. In general, as the target's antenna becomes more directional, the maximum gain of the antenna in a particular direction increases (i.e. the maximum detectable range increases) at the expense of gain in other directions (i.e. the probability of intercept decreases).}

\rev{$P_{I,mn}^{T}$ and $P_{I,mn}^{F}$ are the probabilities that the ESM receiver's measurement in time and frequency overlap with the target's emission in time and frequency, respectively. Unlike the spatial domain, realistic cases exist where $P_{I,mn}^{T}$ or $P_{I,mn}^{F}$ would be equal to 1. For example, if the target is always transmitting, $P_{I,mn}^{T} = 1$, and if the ESM receiver's bandwidth includes all frequencies used by the transmitter, $P_{I,mn}^{F}=1$.}

\subsection{Tracking Formulation and Fusion}
\label{ss:tracking}
In each time step $t$, every node selects one of the two available actions (active radar or passive signal classification) and receives a vector of detections. 
These detections are then filtered and associated to target tracks. 
We follow the development of \cite{4516991} and use the multiple model PHD filter to estimate state of each target, including motion model. 
From Eq. \ref{eq:radar_detections} we have
\begin{equation*}
    \mathbf{Z}_n^t = \bigcup_{m\in\mathcal{M}_n^{(t)}}\mathbf{Z}_{mn}
\end{equation*}
Let $\mathbf{Z}_n^{1:t}$ denote the ordered set of observations until time $t$. 

From \cite{4516991}, we have the multiple model PHD tracking filter for maneuvering targets. 
We reproduce the core steps of the technique here. 
The filter begins with an initial density Eq. (\ref{eq:initial_density}). 
\begin{equation}
    \label{eq:initial_density}
    \tilde{D}_{t|t-1}\left(\mathbf{X}(t-1), V(t)=i | \mathbf{Z}_n^{1:t-1}\right)
\end{equation}

The motion model mixing is given by Eq. (\ref{eq:PHD_mixing}), where $V(t)$ is one of several motion model states, $N_V$ is the number of motion model states, and $P_{ij}$ are the transition matrix entries, Eq. (\ref{eq:markov_transition}), for transitioning from state $i$ to state $j$. 
Note that there is a separate PHD for each possible motion model. 
\begin{multline}
    \label{eq:PHD_mixing}
    \tilde{D}_{t|t-1}\left(\mathbf{X}(t), V(t)=i | \mathbf{Z}_n^{1:t-1}\right) = \\
    \sum_{j=1}^{N_V} D_{t-1|t-1}\left(X(t-1), V(t-1)=j | \mathbf{Z}_n^{1:t-1}\right) P_{ij}, \\
    i = 1:N_V
\end{multline}
The PHD prediction step is given as Eq. (\ref{eq:PHD_predict}), where $\gamma_t(\cdot)$ is the target birth PHD, $e_{t|t-1}(\cdot)$ represents the probability that each target survives to the next round, $f_{t|t-1}$ and is the motion state conditioned target likelihood.  
\begin{multline}
    \label{eq:PHD_predict}
    D_{t|t-1}\left(X(t), V(t)=i | \mathbf{Z}_n^{1:t-1}\right) = \\
    \gamma_t(X(t), V(t)=i) + \\
    \int \left[ e_{t|t-1}\left(X(t)\right)f_{t|t-1}\left(X(t)|X(t-1), V(t)=i\right)\right]\times \\
    \tilde{D}_{t|t-1}\left(X(t), V(t)=i | \mathbf{Z}_n^{1:t-1}\right) dX(t-1)
\end{multline}
Finally, the PHD update step is given as Eq. (\ref{eq:PHD_update}), where $P_D$ is the probability of detection, $\lambda_{FA}$ is the false alarm rate, $C_{FA}$ is the false alarm spatial distribution, and $\Psi_t(\cdot)$ is the PHD likelihood function, Eq. (\ref{eq:PHD_likelihood}). 
\begin{multline}
    \label{eq:PHD_update}
    D_{t|t}\left(X(t), V(t)=i)|\mathbf{Z}_n^{1:t}\right) \cong \\
    \left[\sum_{\mathbf{z}\in\mathbf{Z}_n^t} \frac{P_D(X(t)) f_{t|t}\left(\mathbf{z}|X(t), V(t)=i\right)}{\lambda_{FA}C_{FA} + \Psi_t\left(\mathbf{z}|\mathbf{Z}_n^{1:t-1}\right)} + \left(1-P_D(X(t))\right)\right] \times \\
    D_{t|t-1}\left(X(t), V(t)=i|\mathbf{Z}_n^{1:t-1}\right)
\end{multline}
\begin{multline}
    \label{eq:PHD_likelihood}
    \Psi_t\left(\mathbf{z}|\mathbf{Z}_n^{1:t-1}\right) = \\
    \int\big[ P_D(X(t)) f_{t|t}\left(\mathbf{z}|X(t), V(t)=i\right) \times \\
    D_{t|t-1}\left(X(t), V(t)=i|\mathbf{Z}_n^{1:t-1}\right) \big] dX(t)
\end{multline}
The motion model at time $t$ for target $m$ is estimated by integrating the PHD filters in the vicinity of target $m$ and taking the maximum. 
When class motion model probabilities are substituted for estimated motion model probabilities, this is done via $P_{ij}$. 
\rev{Estimating the motion model parameters results in lower tracking error as time goes on, as shown in Fig. \ref{fig:tuned_filters}. }

\rev{The network exhibits latency between the times at which nodes detect targets, and the times at which node observations arrive at the CC. 
As will be discussed later, this effect is particularly of interest for algorithms which combine data centrally for decision-making. } 
Let $\tau_{m,n}^V$ denote the times when node $n$ detects target $m$ using active radar, and similarly let $\tau_{m,n}^S$ denote the times when node $n$ detects target $m$ using passive signal classification. 
\rev{Then, if the network exhibits a latency of $\sigma_L$, the CC will receive that node's update at time $\tau + t_L$ where $t_L$ is a log-normal random variable \cite{ANTONIOU200272} \cite{8691027} with parameters $\mu=0$ and $\sigma_L$. 
The PDF of a log-normal random variable is given as Eq. \ref{eq:lognormal}. 
\begin{equation}
    t_L \sim \text{Lognormal}(0, \sigma_L^2)
\end{equation}
\begin{equation}
    \label{eq:lognormal}
    f_T(t_L) = \frac{1}{t_L\sigma_L\sqrt{2\pi}} \exp{\left(-\frac{(\ln(t_l-\mu)^2}{2\sigma_L^2}\right)}
\end{equation}
As the network latency increases, we should expect to see tracking performance decrease. 
This is because even though the individual node tracking filters are not impacted by latency, the delayed movement of information through the network results in decreased decision-making performance and poor mode selection. 
}

\subsection{Class Formation}
\begin{figure}
    \centering
    \includegraphics[scale=0.65]{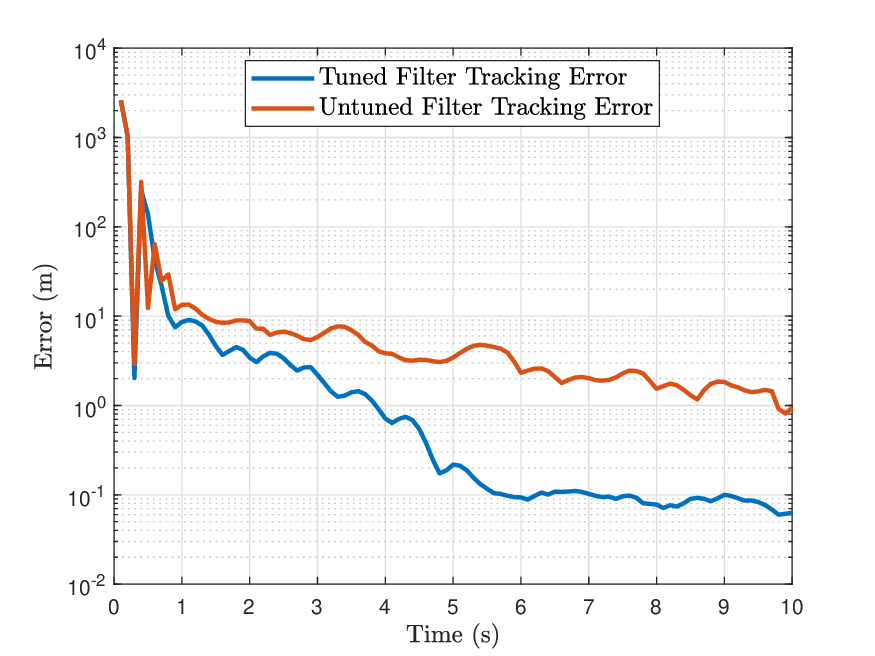}
    \caption{Kalman filters which are tuned to the process noise and motion model probabilities for the class will provide lower tracking error than equivalent filters which are ``untuned''. }
    \label{fig:tuned_filters}
\end{figure}
Since each node can estimate the motion model $\hat{V}_m(t)$ and signal state $\hat{S}_m(t)$ for each target $m\in\mathcal{M}_n^{(t)}$ (the set of targets observable by node $n$ at time $t$), given multiple observations the node can also form estimates of the Markov processes which generate the motion model and signal states. 
Recall that $\tau_{m,n}^V$ denotes the set of times when node $n$ detected target $m$ using active radar, and $\tau_{m,n}^S$ are similarly the times when node $n$ detected target $m$ using passive signal classification. 
Given $t_1$ active and $t_2$ passive observations of target $m$, the motion model stationary distribution $\pi_m^V$ and the signal model stationary distribution $\pi_m^S$ can be estimated as Eq. (\ref{eq:est_motion_stationary}) and Eq. (\ref{eq:est_signal_stationary}). 
Note that this holds due to the ergodicity of the Markov chains: the stationary distribution is the limiting distribution for any initial condition. 
\begin{align}
\label{eq:est_motion_stationary}
    \hat{\pi}_{m,n}^V &= \left[\frac{\sum_{t\in\tau_{m,n}^V}I_{m,n}^V(t, v)}{\#\tau_{m,n}^V} \text{ for $v \in \mathcal{V}$}\right]\\
    I_{m,n}^V(t, v) &= \begin{cases}
        1, & \hat{V}_{m,n}(t) = v\\
        0, & \text{else}
    \end{cases}
\end{align}
\begin{align}
    \label{eq:est_signal_stationary}
    \hat{\pi}_{m,n}^S &= \left[\frac{\sum_{t\in\tau_{m,n}^S}I_{m,n}^S(t,s)}{\#\tau_{m,n}^S} \text{ for $s \in \mathcal{S}$}\right]\\
    I_{m,n}^S(t,s) &= \begin{cases}
        1, & \hat{S}_{m,n}(t) = s\\
        0, & \text{else}
    \end{cases}
\end{align}

Similarly, the motion model state transition probability matrix $\hat{P}_{m,n}^V$ and the signal state transition probability matrix $\hat{P}_{m,n}^S$ can be estimated by node $n$ for each target $m$ by Eq. (\ref{eq:est_motion_transition}) and Eq. (\ref{eq:est_signal_transition}). 

\begin{align}
\label{eq:est_motion_transition}
    \hat{P}_{m,n}^V &= \left[\frac{\sum_{t\in\tau_{m,n}^V}\delta_{m,n}^V(i,j,t)}{\sum_{t\in\tau_{m,n}^V} I_{m,n}^V(j, t)} \text{ for $i,j \in \mathcal{V}$ }\right]\\
    \delta_{m,n}^V(i,j,t) &= \begin{cases}
        1, & \hat{V}_{m,n}(t_0) = i \;\; \& \;\; \hat{V}_{m,n}(t) = j\\
        0, & \text{else}
    \end{cases}\\
    t_0 &= \sup_{t_0\in\tau_{m,n}^V}\{t_0 \text{ s.t. } t_0 < t\}
\end{align}
\begin{align}
\label{eq:est_signal_transition}
    \hat{P}_{m,n}^S &= \left[\frac{\sum_{t\in\tau_{m,n}^S}\delta_{m,n}^S(i,j,t)}{\sum_{t\in\tau_{m,n}^S}I_{m,n}^S(j,t)} \text{ for $i,j \in \mathcal{S}$ }\right]\\
    \delta_{m,n}^S(i,j,t) &= \begin{cases}
        1, & \hat{S}_{m,n}(t_0) = 1 \;\; \& \;\; \hat{V}_{m,n}(t) = j\\
        0, & \text{else}
    \end{cases}\\
    t_0 &= \sup_{t_0\in\tau_{m,n}^V}\{t_0 \text{ s.t. } t_0<t\}
\end{align}

So, in each time step, node $n$ is able to form estimated stationary distributions $\hat{S}_{m,n}$ and $\hat{V}_{m,n}$ for parameters $S_m$ and $V_m$. 

At the end of an epoch, the FC performs a similar process and estimates the distributions using the entire track for each target, as well as the increased number of measurements available to the FC. 
Eq. (\ref{eq:FC_motion_times}) represents the times at which any node chooses the active observation method and Eq. (\ref{eq:FC_signal_times}) represents the set of times at which any node chooses the passive observation method. 
Call the FC estimate for the motion model stationary distribution of target $m$ $\hat{\pi}_m^V$ and call the FC estimate for the signal model stationary distribution of target $m$ $\hat{\pi}_m^S$. 
\begin{equation}
    \label{eq:FC_motion_times}
    \tau_m^V = \bigcup_{n\in\mathcal{N}} \tau_{m,n}^V
\end{equation}
\begin{equation}
    \label{eq:FC_signal_times}
    \tau_m^S = \bigcup_{n\in\mathcal{N}} \tau_{m,n}^S
\end{equation}
These are used to determine the portion of time each target spends in each state. 
Then, we use a modified k-means algorithm to cluster these distributions by similarity. 
The system must also perform model order estimation, as the number of true classes $C$ is not known \emph{a priori}. 
A few different modifications to k-means for distributions have been proposed in the literature, all on the metric used to determine distance between elements: \cite{JMLR:v6:banerjee05b} proposes Bregman divergences, \cite{villani2008optimal} uses the Wasserstein metric (earthmover distance), \cite{e16063273} suggests the $\alpha$-divergence. 
We select the p-Wasserstein metric Eq. (\ref{eq:k-means-metric}) for its simplicity. 
\begin{equation}
    \label{eq:k-means-metric}
    W_p(\hat{\pi}_{m_1}^V, \hat{\pi}_{m_2}^V) = \left(\frac1V \sum_{v=1}^V ||\hat{\pi}_{m_1}^V(v) - \hat{\pi}_{m_2}^V(v)||^p\right)^{1/p}
\end{equation}
This results in $\hat{C}$ different classes. 
Construct sets $\mathcal{M}_{\hat{C}_i}$ containing the target tracks clustered to each class. 
Then, for each estimated class, we call Eq. (\ref{eq:class_motion_stationary}) the class motion model stationary distribution, and similarly for Eq. (\ref{eq:class_signal_stationary}) the class signal model stationary distribution. 
\begin{equation}
\label{eq:class_motion_stationary}
    \overline{\pi}_{\hat{C}_i}^V = \left[\mathlarger{\sum}_{m\in\mathcal{M}_{\hat{C}_i}}\left(\frac{\sum_{t\in\tau_m^V}I_{m}^V(t,v)}{\#\tau_{m}^V}\right) \text{ for $v \in \mathcal{V}$} \right]
\end{equation}
\begin{equation}
\label{eq:class_signal_stationary}
    \overline{\pi}_{\hat{C}_i}^S = \left[ \mathlarger{\sum}_{m\in\mathcal{M}_{\hat{C}_i}} \left( \frac{\sum_{t\in\tau_m^S}I_m^S(t,s)}{\#\tau_m^S} \right) \text{ for $s \in \mathcal{S}$} \right]
\end{equation}

After classes are formed they are distributed to each node. 
Then, new targets may be associated to a class. 
Then, by the following assumption, we can associate targets to classes using either their estimated motion model or signal model stationary distributions. 

\begin{assumption}[Single Family]
\label{ass:single_family}
The targets present in the environment belong to a single target family.     
\end{assumption}

Assumption \ref{ass:single_family} states that all classes may be represented by the same parametres. 
Targets are associated to a class by Eq. (\ref{eq:class_association}), at which point the corresponding tracking filter can be updated to use the motion model probabilities from the class. 

\begin{equation}
\label{eq:class_association}
    C_{m,n} = \min_{c\in\mathcal{C}} \left( W_2(\hat{\pi}_{m,n}^V, \overline{\pi}_{c}^V) + W_2(\hat{\pi}_{m,n}^S, \overline{\pi}_c^S) \right)
\end{equation}

\section{Methods}
\label{sec:methods}

\subsection{Centralized Bandit}
\label{ss:methods_centralized}
We use the common Upper Confidence Bound (\textbf{UCB}) \cite{bandits} \cite{UCB_fischer} formulation, where a single player selects from finitely many actions (``arms'') and observes a corresponding reward. 
Over many iterations, the goal of the player is to maximize the total expected reward. 
To reduce the complexity\footnote{An alternative approach might assign a single bandit algorithm with one arm per combination of node actions, which would total $2^{N}$ arms. Our approach covers the same action space, while reducing the number of arms per bandit algorithm to two. }, we pose the problem with one bandit algorithm per node, which are all evaluated by the FC. 
The algorithms could possibly be implemented by each node, but since the reward function (shown below) requires global information, this approach would require more communication. 

\paragraph{Rewards} The reward for each action is generated by the normalized Shannon entropy of the motion model distribution. 
This value is used since it is constrained to the unit interval and reflects the information content of the motion model distribution: as the distribution of states becomes more flat, the Shannon entropy will increase. 
Other works have often considered target tracking error (via Kalman filter covariance or Bayesian Cram\'er-Rao Lower Bound (\textbf{BCRLB}) as the error for related problems. 
This value has the problem of being very path-dependent; the history of observations can bias the estimated reward substantially. 
Instead, the proposed Shannon entropy reward reflects the estimated class of the target, and is less biased. 
This is particularly useful because as targets become more maneuverable, Kalman filters become less accurate and therefore benefit from more frequent updating \cite{5977487}. 

\begin{equation}
\label{eq:j4_rewards}
	u_n(t) = \frac1{M_n}\sum_{j=1}^{M_n}[\eta(V_j(t)), \eta(S_j(t))] 
\end{equation}

\begin{equation}
\label{eq:shannon}
	\eta(X(t)) = \sum_{i=1}^{n_X} \frac{x_i\log_2(x_i)}{\log_2(n_X)}
\end{equation}

Eq. (\ref{eq:j4_rewards}) shows the reward for selecting either action at node $n$, where $\eta(\cdot)$ represents the normalized Shannon entropy. 
Eq. (\ref{eq:shannon}) shows the Shannon entropy for a distribution $X$ with $n_X$ states $x_i$.  
The reward for selecting the radar action is dependent on the distribution of motion states of covered targets, and the reward for selecting the passive action is dependent on the distribution of the signal states of covered targets. 
So, the reward formulation is dependent on all of the targets viewed by a particular node. 

\paragraph{Mode Selection} Then, in each time step $t$, Eq. (\ref{eq:UCB}) is used to select the mode with index $i$ for node $n$ where $N_t(n)$ is the number of times each mode has been selected before time $t$. 
Table \ref{tab:modes} shows the mode for each index. 
\begin{equation}
\label{eq:UCB}
 \text{Mode}(t) = \argmin_{i\in u_n} \left[u_n + \sqrt{\frac{\log t}{N_t(n)}}\; \right]
\end{equation}
\begin{table}
    \centering
    \caption{CRN Modes}
    \begin{tabular}{c|c}
        Index & Mode \\
        \hline 
        1 & Active Radar\\
        2 & Passive ESM\\[1ex]
    \end{tabular}
    \label{tab:modes}
\end{table}

\subsection{Distributed Approach}
While a centralized technique has the benefit of more well-informed decision-making, it can suffer from latency. 
In other words, the time cost of moving information from the network's edge (the nodes) to the central decision-maker may cause extra error. 
Moving the decision to the edge can mitigate this effect.

We can begin by noting that the \emph{distribution of target age is not stationary in time}. 
\rev{Fig. \ref{fig:number_targets} shows the mean of the target age distribution over 30s. As can be seen, the mean age increases in time. }
When the network is ``switched on'', all targets will be young. 
Since the target model allows for targets to enter and exit the scene, there is a fixed rate at which new targets appear after the beginning of the game. 
Since this is a Poisson process, i.e. the time between new target appearances is exponentially distributed, it is independent of the start time. 
Similarly, target death is exponentially distributed. 
Due to these, the discrepancy in target age distribution is most pronounced at early time steps. 

\begin{figure}
    \centering
    \includegraphics[scale=0.65]{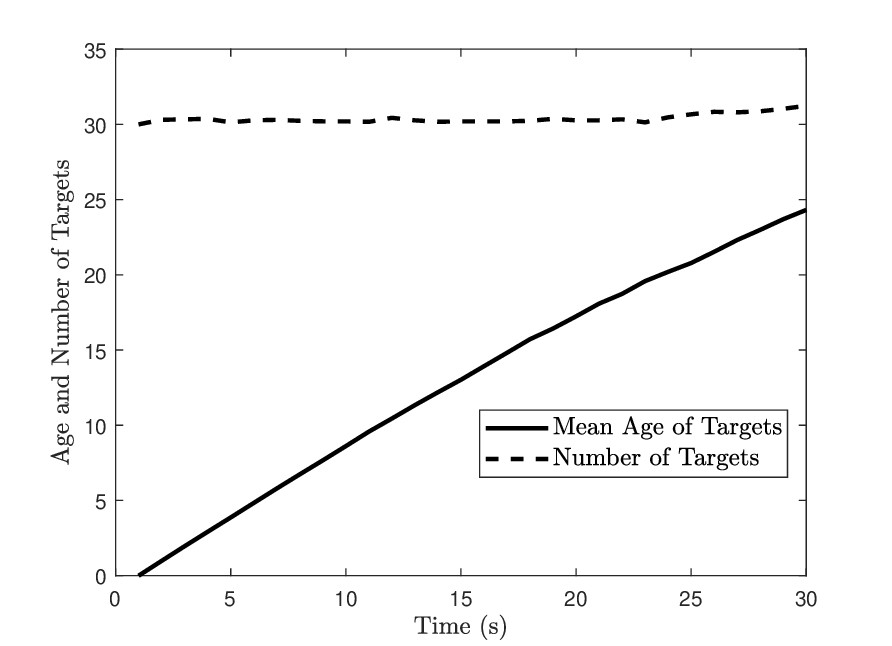}
    \caption{As the scenario ages, the mean target age increases. \rev{The y-axis denotes both the number of targets, as well as the age in seconds. }}
    \label{fig:number_targets}
\end{figure}

The target age distribution is relevant to tracking performance. 
For a given target track, we should expect less error if the target is classified early in the track than if the target is classified later in the track. 

Lastly we must also note that passive signal detection is a more reliable indicator of target class than target motion model, due to the more frequent updating of the signal Markov chain. 

Due to this relationship between target age and tracking performance, we should expect that policies which prioritize passive observation early in each target track will result in lower tracking error. 
So, we present a distributed mode selection technique which is informed by target age.

Eq. (\ref{eq:distributed_utility}) shows the utility function calculated by each node in each time step. 
The age of each track $(\Delta_m)$ is used to weight the Shannon entropy $\eta$ Eq. (\ref{eq:shannon}) of the class stationary distribution for target $m$, if it exists. 
If the target has not been associated with a class, the term $\gamma$ is used to emphasize passive estimation so that the target may be classified. 
Older tracks are weighted less, as the importance of identifying new targets is greater. 
\begin{align}
\label{eq:distributed_utility}
    \mathcal{U}(t) &= \frac{1}{\#\mathcal{M}_n^{(t)}} \sum_{m\in\mathcal{M}_n^{(t)}} \frac{1}{\Delta_m} \left[f(m), 1-f(m)\right]\\
    f(m) &= \begin{cases}
        \eta\left(\overline{\pi}_{C_i}^V\right) , & \exists i \text{ s.t. }m \in\mathcal{M}_{C_i} \\
        \gamma , & \text{else}
    \end{cases}
\end{align}
Then, the selected action at time $t$ is given by Eq. (\ref{eq:distributed_mode}), as randomly sampling the active or passive mode according to the weights in $\mathcal{U}$. 
\begin{equation}
\label{eq:distributed_mode}
    \text{Mode}(t) \overset{{\scriptscriptstyle \operatorname{R}}}{\leftarrow}  \mathcal{U}(t)
\end{equation}



\section{Numerical Simulations}
\label{sec:simulations}
\begin{table}
    \centering
    \caption{Simulation Parameters}
    \begin{tabular}{c|c|c}
        Description & Value & Variable\\[0.5ex] 
        \hline
         Node Density per \qty{}{\km\squared} & 0.2 & $\lambda_n$\\
         Target Density per \qty{}{\km\squared}& 0.3 & $\lambda_m$\\
         Simulated Region & \qty{100}{\km\squared} & $|B|$\\
         Number of Classes & 3 &\\
         Averaged Simulations & 30 &\\ 
         Number of Epochs & 15 &\\ 
         Epoch Duration & 25s &\\ [1ex]
    \end{tabular}    
    \label{tab:j4_params}
\end{table}

We simulate a CRN with the parameters listed in Table \ref{tab:j4_params}. 
In particular, we simulate fifteen epochs. 
After each epoch (i.e., scenario) the FC updates the list of target classes. 
When the game begins, there are no classes, and when it ends the classes should have high accuracy. 
Fig.  \ref{fig:classes} shows that this is the case; class accuracy increases in each epoch. 
Further, Fig.  \ref{fig:classes} also shows that the accuracy with which targets are associated to a class increases in each epoch. 
This, coupled with the result shown in Fig.  \ref{fig:tuned_filters} which shows that tracking accuracy improves when a tuned filter is used, implies that the observed tracking accuracy in the entire network should improve. 
\begin{figure}
    \centering
    \includegraphics[scale=0.65]{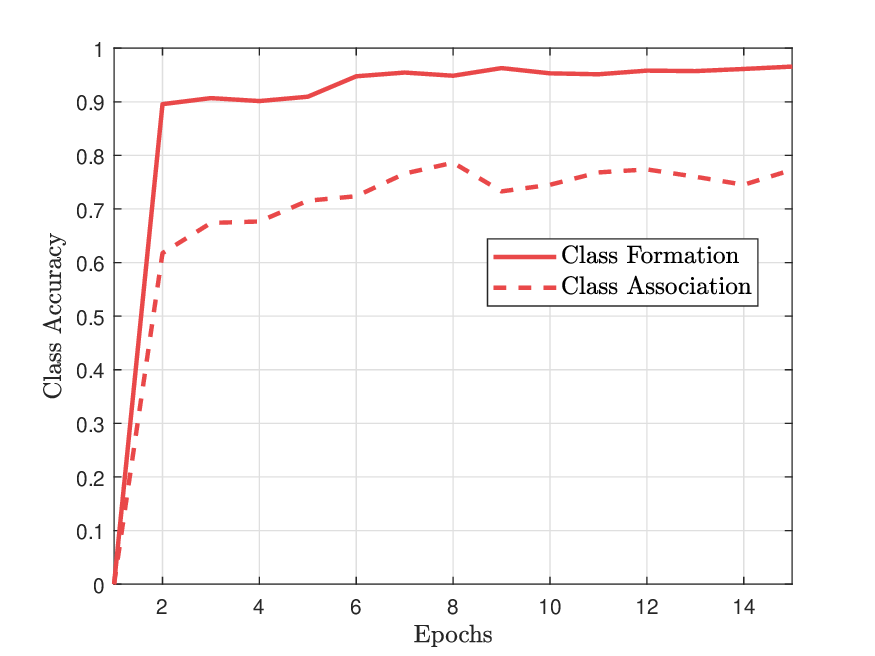}
    \caption{Class formation accuracy is higher than class association accuracy. This is due to the greater number of observations available to the FC during class formation; the nodes must rely on their own measurements to associate targets to classes. }
    \label{fig:classes}
\end{figure}

We compare the distributed and centralized approaches against two baselines: the first selects radar all the time (``Radar Only''), and the second selects radar $80\%$ of the time (``Random $80\%$ Radar''). 
\rev{The empirical value of $80\%$ was selected to approximately match with the observed radar percentages of the Centralized Bandit and Distributed policies. }
We choose these to compare against no mode control, and against a policy which provides equivalent max intercept range and power utilization. 
Fig.  \ref{fig:radar_utilization} shows the percent of time in which radar is selected. 
Passive ESM is selected the other portion of time. 
\begin{figure}
    \centering
    \includegraphics[scale=0.65]{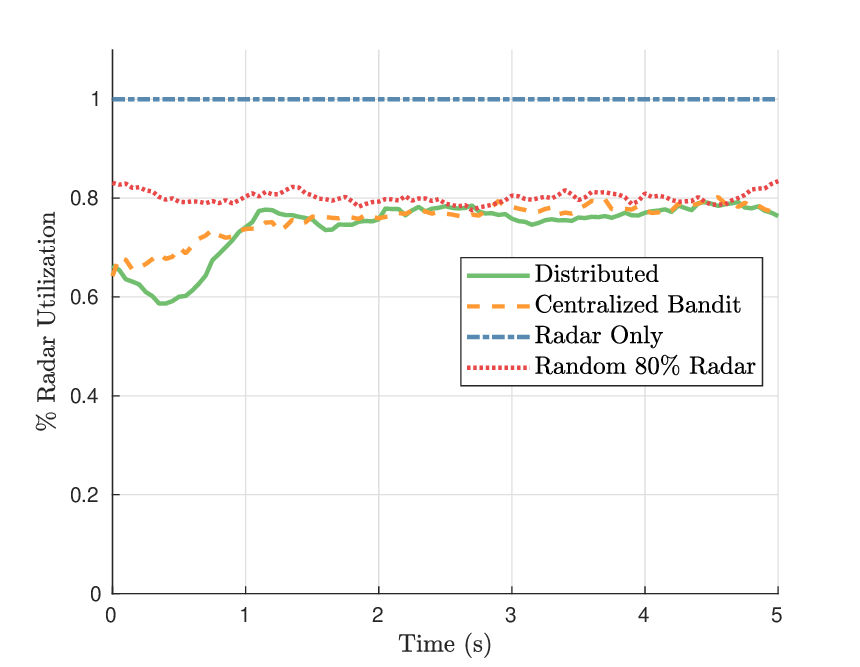}
    \caption{Radar is selected the indicated portion of time, with passive ESM being selected in the complementary portion of time. The centralized and distributed policies are compared against a policy which uses no mode control and only selects radar, and a policy which selects radar randomly $80\%$ of the time. }
    \label{fig:radar_utilization}
\end{figure}

The maximum radar intercept range is the maximum range at which an intercept receiver could detect the radar emissions from a single node. 
Fig. \ref{fig:max_intercept} shows the distribution of maximum intercept range generated by the four different policies. 
Since the centralized, distributed, and $80\%$ radar policies all use radar approximately $80\%$ of the time, they all have similar maximum intercept range distributions. 
Since the policy which only selects radar uses much more radar power, it has a much higher maximum intercept range. 
\begin{figure}
    \centering
    \includegraphics[scale=0.65]{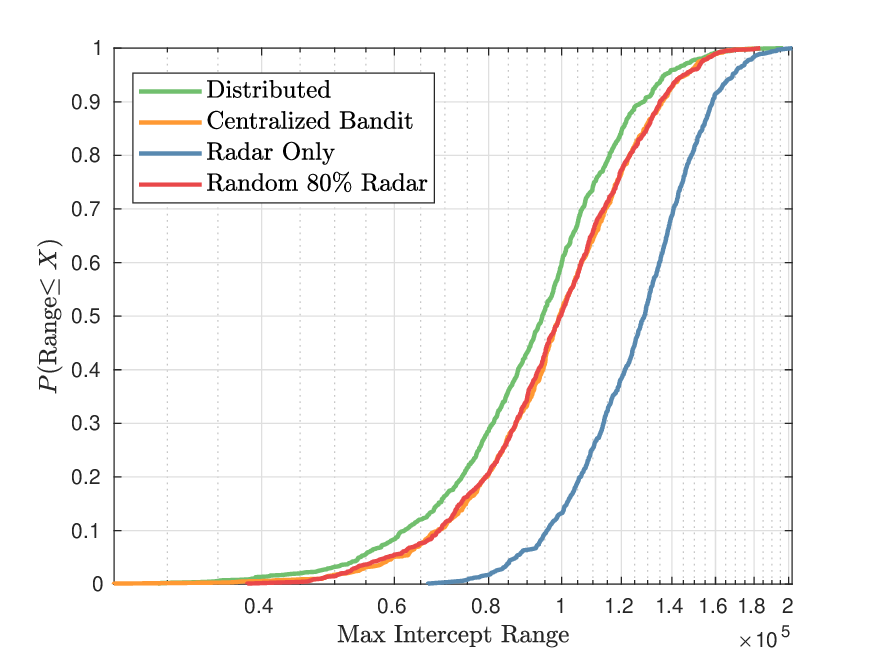}
    \caption{Distribution of the maximum intercept range for the four different policies. }
    \label{fig:max_intercept}
\end{figure}

We present \rev{in Fig. \ref{fig:ECDF}} the tracking error generated by these policies. 
Since the active radar estimation only uses radar and does not utilize target classification, we should expect its performance to indicate the baseline. 
Since the $80\%$ radar policy alternates between active radar and passive ESM, but does not use target classification, we should expect its performance to be worse than the only radar policy. 
Finally, since the centralized and distributed policies both use target classes and passive ESM estimation, we should expect their performance to be very good. 
\rev{We see in Fig. \ref{fig:ECDF} that our expectations hold; the Centralized Bandit (with zero network latency) exhibits lower error probabilities than all other techniques. 
In addition, both the Radar Only and the $80\%$ radar policies underperform the Distributed and Centralized Bandit policies}
\begin{figure}
    \centering
    \includegraphics[scale=0.65]{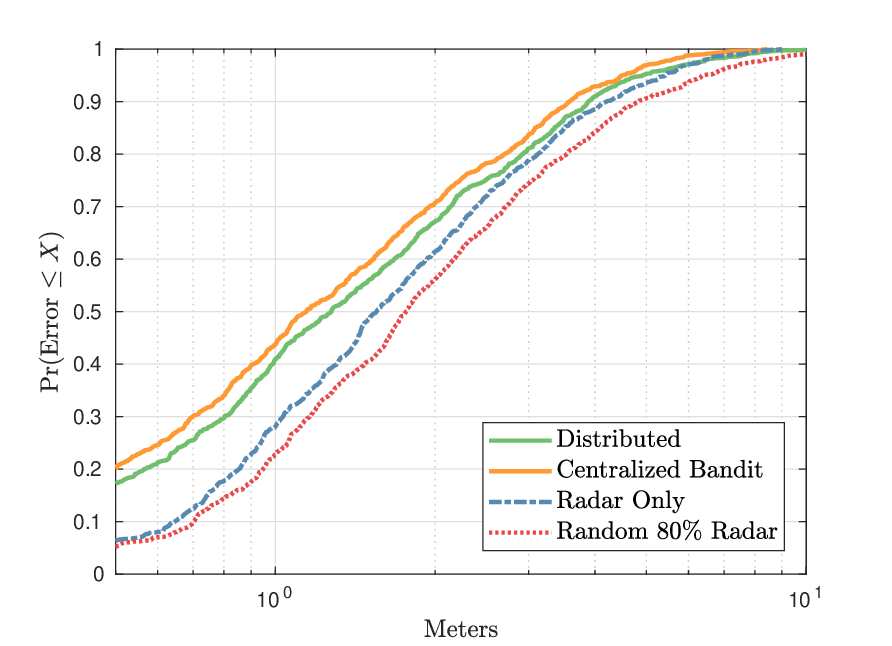}
    \caption{Tracking error distributions for all four policies \rev{with zero network latency}. }
    \label{fig:ECDF}
\end{figure}
\begin{figure}
    \centering
    \includegraphics[scale=0.65]{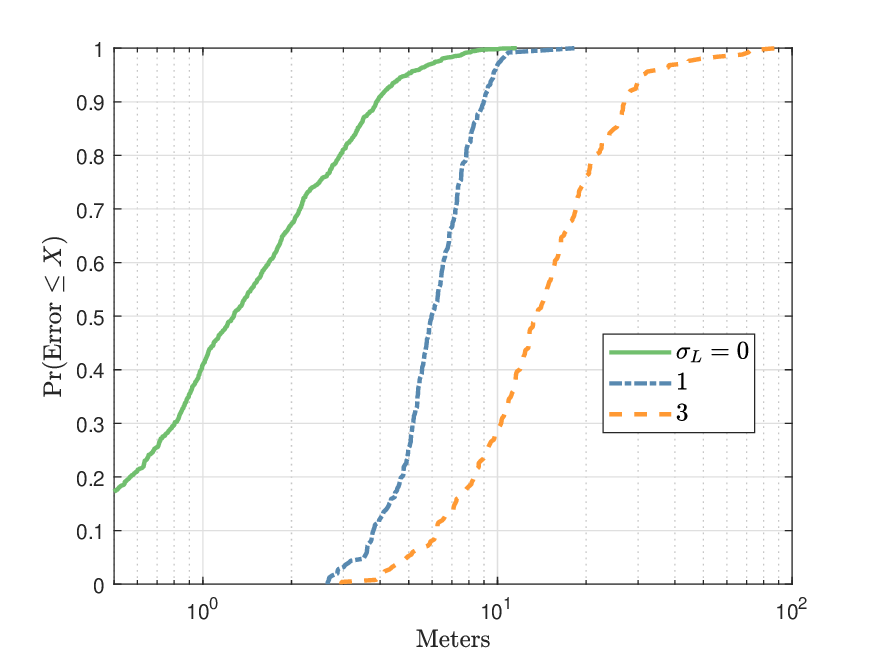}
    \caption{\rev{Tracking error distribution for the Centralized Bandit policy with network latency parameters $\sigma_L$ set to 0, 1, and 3. }}
    \label{fig:latency}
\end{figure}
\rev{Further it can be seen in Fig. \ref{fig:latency} that the impact of increased network latency causes the error experienced by the centralized bandit policy to increase around an order of magnitude. 
This is because the network is unable to process updates in a timely manner, resulting in poor decision-making and tracking. }


\section{Conclusions}
\label{sec:conclusions}

In this work we investigated the capabilities available to a network when multiple modes of operation are present. 
This, along with our previously presented work \cite{howard2023_modecontrolconf}, represents the first contribution towards the field of Cognitive Radar Network (\textbf{CRN}) mode control. 
In particular, we examine the case where the nodes of a CRN have, in addition to active radar, the ability to conduct passive signal parameter estimation. 
In each of many time steps, every node in the CRN can operate in one of these two modes. 
When conducting radar observation, a node provides to the \rev{Fusion Center (\textbf{FC})} an estimate of the position and velocity of all targets within its range. 
When conducting passive signal parameter estimation, a node provides to the FC an estimate of the signal emissions from all targets within its range. 
The passive measurements are associated with targets via direction of arrival estimation. 

In addition to these direct observations, the FC maintains records of the motion model (i.e., constant velocity, constant turn, etc. ) and the history of signal emissions of each target. 
Modeling both of these as Markov processes, the FC estimates the transition probabilities for each of these parameters over time. 
On fixed intervals (``epochs’’), the FC then clusters these targets into ``classes’’ which contain targets with similar behavior. 
Finally, using these constructed target classes, the FC is able to estimate the class of future targets in order to determine their likely behavior. 
In this way, the motion model of targets is able to be estimated using \emph{passive} observation, as the class of a target is dependent on both signal and motion characteristics. 
So, the FC is able to trade between active radar observation and passive signal parameter estimation over time. 

We show that the use of this estimation technique can 1) reduce the effective radiated power of the network by decreasing the proportion of time radar is performed and 2) increase the target tracking accuracy of the network by leveraging ``prior'' information on the targets. 
We demonstrate that passive signal parameter estimation is constrained by the maximum detectable range and probability of intercept of the targets emissions.
In addition, we contribute mathematical analysis of the class formation technique. 

We demonstrated that the distributed estimation technique can mitigate the effects of network latency by moving the decision-making to the edge of the network. 
The centralized technique, however, benefits from more complete knowledge of the targets.


\bibliographystyle{IEEEtran}
\bibliography{bibli}

\end{document}